\documentclass{llncs}
\usepackage{makeidx}  
\begin{document}

\title{Computation of the Adjoint Matrix}
\titlerunning{Computation of Adjoint Matrix}  
\author{Alkiviadis Akritas\inst{1} \and Gennadi Malaschonok\inst{2} 
\thanks{Paper was published in: Computational Science - ICCS 2006: 6th Int. Conf., UK, May 28-31, 2006. Proc., Part II, LNCS 3992, Springer Berlin, 2006, 486-489}}
\authorrunning{A. Akritas and G.Malaschonok}   
\tocauthor{Alkiviadis Akritas (University of Thessaly), 
Gennadi Malaschonok (Tambov State University)}
\institute{University of Thessaly, Department of
 Computer and Communication Engineering, GR-38221 Volos, Greece\\
\email{akritas@uth.gr}
\and
Tambov State University, Laboratory for Algebraic Computations, \\ 
Internatsionalnaya 33, 392622 Tambov, Russia, \\
\email{malaschonok@math-iu.tstu.ru}
}

\maketitle               

\begin{abstract}
The best method for computing the adjoint matrix of an order $n$ matrix in an arbitrary commutative ring requires 
\linebreak
$O(n^{\beta+1/3}\log n \log \log n)$ operations, provided the complexity of the algorithm for multiplying two 
matrices is 
$\gamma n^\beta+o(n^\beta)$.  
For a commutative domain -- and under the same assumptions -- the complexity of the best method is 
${6\gamma n^\beta}/{(2^{\beta}-2)}+o(n^\beta)$. 
In the present work a new method is presented for the computation of the adjoint matrix in a commutative domain.  
Despite the fact that the number of operations required is now 1.5 times more, than that of the best method, 
this new method  permits a better parallelization of the computational process and may be successfully employed 
for computations in parallel computational systems.
\end{abstract}

\newcommand{\Dmtr}[4] 
{\left|\begin{array}{rr}#1 & #2\\#3 & #4\end{array}\right|}

\section{Statement of the problem}

\hspace*{0.5 cm} The adjoint matrix is a transposed matrix of algebraic complements.  
If the determinant of the matrix is invertible, then the inverse 
matrix may be computed as the adjoint matrix divided by the determinant.  
The adjoint matrix of a given matrix  $A$ will be denoted by  $A^*$: 
$A^*=\det(A) A^{-1}$.

The best method for computing the adjoint matrix of an order $n$ matrix in an arbitrary commutative ring requires 
$O(n^{\beta+1/3}\log n \log \log n)$ operations (see \cite{Kalt92} and \cite{KaVi01}).  
For a commutative domain the complexity of the best method is 
${6\gamma n^\beta}/{(2^{\beta}-2)}+o(n^\beta)$ (see \cite{Ma00}).
It is asssumed that the complexity of the algorithm for multiplying two matrices is 
$\gamma n^\beta+o(n^\beta)$.  
 
In a commutative domain the algorithm is based on applications of determinant identities \cite{Ma00}, \cite{Ma83}.
It generalizes in a commutative domain the following formula for the inverse matrix ${\cal A}^{-1}$:
$$ 
{\cal A}^{-1}=
\left(\begin{array}{cc} I&-A^{-1}C\\0&I \end{array}\right)
      \left(\begin{array}{cc} I&0\\0& (D-BA^{-1}C)^{-1} \end{array}\right)
 \left(\begin{array}{cc} I&0\\-B&I                 \end{array}\right)
 \left(\begin{array}{cc} A^{-1}&0\\0&I                  \end{array}\right),
$$
where  ${\cal A}= \left(\begin{array}{cc} A&C\\B&D \end{array}\right)$ --
is an invertible matrix with invertible block $A$.

In the present work a new method is proposed for the computation of the adjoint matrix in a commutative domain.  
Despite the fact that the number of operations required is now 1.5 times more, than that of the algorithm 
described in [1], this new method  permits a better parallelization of the computational process. 

This new method generalizes in a commutative domain the following factorization of the inverse matrix ${\cal A}^{-1}$:
$$ 
{\cal A}^{-1}=
\left(\begin{array}{cc} I&-A^{-1}C\\0&I \end{array}\right)
      \left(\begin{array}{cc} I&0\\0& (B^{-1}D-A^{-1}C)^{-1} \end{array}\right) 
 \left(\begin{array}{cc} I&0\\-I&I                 \end{array}\right)
 \left(\begin{array}{cc} A^{-1}&0\\0&B^{-1}        \end{array}\right).
$$

The second section is devoted to the proof of the determinant identity of column replacement, which is used as 
the basis of the proposed method for computing the adjoint matrix.  In the third section additional theorems are 
proved, which are fundamental for the new method.  In the fourth section the algorithm and a small example are 
presented for the computation of the adjoint matrix.  Finally, in the fifth section, a discussion is presented 
of the algorithm along with its advantages.

\section {Identity of column replacement}
Let $B$ be a matrix of order $n$ and assume two different columns fixed.
We denote by $B_{\{x,y\}}$ the matrix which is obtained from $B$ after replacing the two fixed columns by the 
columns $x$ and $y$, respectively.

\begin{theorem} {(Identity of column replacement.)}

For every matrix $B\in R^{n \times n}$ and columns $a,b,c,d\in R^n$
the following identity holds
$$
        \det B_{\{ab\}} \det B_{\{cd\}}=
\Dmtr {\det B_{\{ad\}}} {\det B_{\{db\}}}
      {\det B_{\{ac\}}} {\det B_{\{cb\}}}
\eqno(1)
$$
\end{theorem}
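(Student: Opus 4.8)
The plan is to recognise $(1)$ as a disguised Grassmann--Pl\"ucker relation. Let $i<j$ be the two fixed column positions, let $v_1,\ldots,v_{n-2}$ be the remaining columns of $B$ listed in their original order, and put $V=[\,v_1\,|\cdots|\,v_{n-2}\,]$. Reordering columns, $\det B_{\{x,y\}}=\varepsilon\cdot\det[\,V\,|\,x\,|\,y\,]$, where the sign $\varepsilon=\pm1$ depends only on $i,j,n$ and not on $x,y$. Since every term appearing in $(1)$ is a product of two such determinants, the factor $\varepsilon^{2}=1$ drops out, so it suffices to establish $(1)$ with $\det B_{\{x,y\}}$ replaced by the bracket $[xy]:=\det[\,V\,|\,x\,|\,y\,]$. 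Using $[xy]=-[yx]$, a one-line rearrangement turns $(1)$ into the three-term relation $[ab]\,[cd]-[ac]\,[bd]+[ad]\,[bc]=0$, which is the statement I would actually prove.

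Both sides of this relation are polynomials with integer coefficients in the entries of $V,a,b,c,d$, so by the principle of permanence of identities it is enough to prove it when those entries are independent indeterminates over the rationals, hence over a field; the statement over an arbitrary commutative ring then follows by specialisation. Over such a field $V$ has its generic rank $n-2$ (here $n\ge2$, since two distinct columns must exist; when $n=2$ the matrix $V$ is empty and $(1)$ is already the plain $2\times4$ Pl\"ucker identity). Now left-multiplying $V,a,b,c,d$ simultaneously by an invertible matrix multiplies every bracket by its determinant, while right-multiplying $V$ by an invertible $(n-2)\times(n-2)$ matrix rescales every bracket by a common nonzero scalar; in either case both sides of the three-term relation get multiplied by one and the same nonzero square. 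I would use such operations to bring $V$ to the form whose $k$-th column is the unit vector $e_k$ for $1\le k\le n-2$.

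With $V$ in this form, subtracting suitable multiples of the first $n-2$ columns of $[\,V\,|\,x\,|\,y\,]$ from its columns $x$ and $y$ leaves the determinant unchanged and exposes a block structure, so that $[xy]=x_{n-1}y_n-x_ny_{n-1}$, the $2\times2$ minor on the last two coordinates. Writing $\langle xy\rangle$ for this quantity, the relation to be proved reads $\langle ab\rangle\langle cd\rangle-\langle ac\rangle\langle bd\rangle+\langle ad\rangle\langle bc\rangle=0$, which is the classical Grassmann--Pl\"ucker identity for the $2\times4$ matrix whose columns are $a,b,c,d$ restricted to the last two coordinates; expanding both sides, every monomial that arises occurs exactly twice with opposite signs.

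The step carrying all the content — and the one I expect to be the real obstacle — is the passage to brackets: identity $(1)$ is genuinely \emph{false} if $\det B_{\{x,y\}}$ is replaced by an arbitrary alternating bilinear form $x^{T}Cy$ (for example $C=E_{12}-E_{21}+E_{34}-E_{43}$, with $a,b,c,d$ the first four unit vectors, violates it), so the proof must use that $n-2$ columns of $B$ are held fixed. What that hypothesis secretly supplies is that the antisymmetric matrix $C$ with $\det B_{\{x,y\}}=x^{T}Cy$ has rank at most $2$ — equivalently $\det B_{\{x,y\}}=\det(L[\,x\,|\,y\,])$ for a single $2\times n$ matrix $L$ — and the normalisation of $V$ above is simply one way to make this explicit. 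Granting that, all the remaining computation is routine.
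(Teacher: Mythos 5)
Your proof is correct, but it takes a genuinely different route from the paper's. The paper proves (1) directly over an arbitrary commutative ring: it embeds the product $\det B_{\{ab\}}\det B_{\{cd\}}$ as a $2n\times 2n$ block determinant, subtracts the second block row from the first, and applies Laplace expansion along the first $n$ rows, after which a careful sign bookkeeping yields (1) verbatim. You instead recognise (1) as the three-term Grassmann--Pl\"ucker relation for the bracket $[xy]=\det[\,V\,|\,x\,|\,y\,]$, reduce to independent indeterminates over a field by permanence of identities, normalise $V$ to $\left(\begin{smallmatrix}I\\0\end{smallmatrix}\right)$ by row and column operations (each of which rescales every bracket by a common unit, hence both sides of the relation by the same nonzero square), and collapse the bracket to the $2\times 2$ minor $x_{n-1}y_n-x_ny_{n-1}$, at which point the identity is the classical $2\times 4$ Pl\"ucker relation verified by expansion. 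All the individual steps check out: the sign $\varepsilon$ really is independent of $x,y$ and cancels in every quadratic term; the rearrangement of (1) into $[ab][cd]-[ac][bd]+[ad][bc]=0$ is right under the convention $\Dmtr{p}{q}{r}{s}=ps-qr$; and the specialisation argument is legitimate since both sides are integer polynomials in the entries. What the paper's argument buys is that it is entirely ring-theoretic and constructive, with no detour through generic matrices or fields; what yours buys is conceptual clarity --- in particular your closing observation, that the identity fails for an alternating form of rank $4$ and therefore genuinely depends on the form being a bordered determinant of a fixed $n\times(n-2)$ block, pinpoints exactly where the content of the theorem lies, something the paper's computation leaves implicit.
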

\begin{proof}

Let $O$ denote the zero matrix of order $n$, and let
$o\in R^n$ denote the zero column.  Looking at the determinant equation
$$ \Dmtr {  B_{\{ab\}}} {  B_{\{oo\}}}
{  O_{\{ao\}}} {  B_{\{cd\}}}
=
\Dmtr {  B_{\{ob\}}} {  O_{\{-c,-d\}}}
{  O_{\{ao\}}} {  B_{\{c,d\}}}.
$$
we observe the following: The determinant on the right is obtained from the determinant on the left by 
subtracting the second block row from the first.  
Moreover, using Laplace's expansion theorem, every determinant may be expanded by the first 
$n$ rows according to the formula
$$
{\det B_{\{ab\}}}{\det B_{\{cd\}}}=
(-1)^{n}\det(B_{\{*,b\}},-c) \det (a,B_{\{*,d\}})+
$$
$$
+(-1)^{n+i+j}\det(B_{\{*,b\}},-d) \det (a,B_{\{c,*\}})
  \eqno(2)
$$
where $i$ and $j$ are the numbers of the fixed columns $c$ and $b$ in the matrix
$B_{\{c,b\}}$, and the matrices $B_{\{*,b\}}$ and $B_{\{c,*\}}$ are obtained from
$B_{\{c,b\}}$ by deleting rows $c$ and $b$, respectively.

On the right-hand side of (2) we apply the identities

\noindent
$\det(B_{\{*,b\}},-c)= (-1)^{n-i}\det B_{\{-c,b\}}$,

\noindent
$\det(a,B_{\{*,d\}}) = (-1)^{i-1}\det B_{\{a,d\}} $,

\noindent
$\det(B_{\{*,b\}},-d)= (-1)^{n-i}\det B_{\{-d,b\}}$,

\noindent
$\det(a,B_{\{c,*\}}) = (-1)^{j-1}\det B_{\{c,a\}}$ $=
(-1)^{j}\det B_{\{a,c\}}$,

\noindent
and obtain identity (1).
\qed
\end{proof}
 
For example, for the matrix of order 2 the identity of column replacement is as follows
$$
\Dmtr a c  b d \Dmtr x u  y v=
\Dmtr a u  b v \Dmtr x c  y d
-\Dmtr a x  b y \Dmtr  u  c v d.
$$


\section {Fundamental theorems}

\begin{theorem}
Let $R$ be a commutative domain, 
${\cal A}=
\left(\begin{array}{cc} A&C\\B&D \end{array}\right)$ a matrix of order $2n$ over $R$,  $A, B, C, D$ square blocks,
$\alpha=\det A \neq 0$,  $\beta=\det B \neq 0$, $F=\alpha B^*D-\beta A^*C$.

Then every minor of order $k$ of the matrix $F$  ($k\leq n$) 
is divisible by $(\alpha\beta)^{k-1}$
and the following identity holds
$$
\det F = (\alpha\beta)^{n-1} \det {\cal A}.
\eqno (3)
$$ 
\end{theorem}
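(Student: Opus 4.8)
The plan is to reduce the determinant identity (3) to the column-replacement identity (1) established in Theorem 1, applied repeatedly. First I would set up notation: write $A^*C$ and $B^*D$ columnwise, and recall that for a square block $A$ with $\det A = \alpha$, the matrix $\alpha^{-1} A^* C$ (when it makes sense over the fraction field) has columns that solve $A x_j = (\text{$j$-th column of } \alpha C)$ up to the scalar $\alpha$; more precisely, the $j$-th column of $A^* C$ is $\alpha \cdot A^{-1} c_j$ over the fraction field $Q$ of $R$. Similarly the $j$-th column of $B^* D$ is $\beta \cdot B^{-1} d_j$. Hence over $Q$ the $j$-th column of $F = \alpha B^* D - \beta A^* C$ equals $\alpha\beta\,(B^{-1} d_j - A^{-1} c_j)$, which already shows each entry of $F$ is divisible by… well, reveals the structure; but to get the sharper divisibility of $k\times k$ minors by $(\alpha\beta)^{k-1}$ and the exact determinant formula over $R$ itself, I want to stay inside $R$ and use determinant identities rather than inverses.

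The key step is to express a single column of $F$ through a determinant of a bordered matrix, and then to express a $k\times k$ minor of $F$ via the column-replacement identity (1). Concretely, fix the block matrix $B$ in Theorem 1 to be $A$ with an extra reference column, so that quantities like $\det A_{\{*, c_j\}}$ reproduce, up to sign, the entries of $A^* C$. The identity (1) relates a product of two such determinants to a $2\times 2$ determinant of four of them; iterating, a $k\times k$ minor of $F$ — which is a $k$-fold alternating sum of products of entries of $\alpha B^* D$ and $\beta A^* C$ — collapses, via (1), into a single determinant of order $k$ whose entries are themselves $2n$-dimensional minors, each carrying one extra factor of $\alpha\beta$ beyond the "diagonal" count; this yields the $(\alpha\beta)^{k-1}$ divisibility. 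Taking $k = n$ and identifying the resulting order-$n$ determinant with $\det\mathcal{A}$ (by recognizing it as the Schur-complement expansion of $\det\mathcal{A}$, cf.\ the block-inverse factorization displayed in the introduction) gives (3) with the exponent $n-1$.

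The cleanest route is probably: (i) prove the single-column statement — that the $j$-th column of $F$, viewed appropriately, is $\alpha\beta$ times a column of $2n$-minors of $\mathcal{A}$ — by one application of Laplace expansion; then (ii) prove the $k$-minor divisibility by induction on $k$ using (1) to peel off one column at a time, each step contributing exactly one factor $\alpha\beta$; then (iii) for $k=n$ pin down the leftover scalar by a degree/specialization argument or by direct comparison with the Schur complement. I expect the main obstacle to be step (ii): keeping the signs straight and verifying that the alternating sum defining a $k\times k$ minor of $F$ really does reassemble into a single order-$k$ determinant of $\mathcal{A}$-minors under repeated use of (1) — the bookkeeping of which columns are "fixed" and which are "replaced", and the powers of $-1$ from Laplace expansion, is where the argument is most likely to go wrong. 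A secondary subtlety is justifying that an identity first derived over the fraction field $Q$ (where $A^{-1}, B^{-1}$ exist) actually holds over $R$; this is handled by the standard observation that both sides of (3) are polynomials in the entries of $\mathcal{A}$, so an identity valid on the Zariski-dense set $\{\alpha\beta\neq 0\}$ holds identically.
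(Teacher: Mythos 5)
There is a genuine gap at exactly the point you flag yourself: the divisibility of a general $k\times k$ minor of $F$ by $(\alpha\beta)^{k-1}$. Your plan is to iterate identity (1) "peeling off one column at a time," but you never exhibit the induction step, and the exponent bookkeeping in your sketch does not come out right: if the order-$k$ determinant you hope to collapse to had every entry carrying a factor $\alpha\beta$, you would get $(\alpha\beta)^{k}$, not $(\alpha\beta)^{k-1}$, and it is not clear how repeated use of (1) — which relates a \emph{product of two} replacement determinants to a $2\times 2$ array of them — assembles an alternating sum of $k!$ products into a single order-$k$ determinant. The paper takes a different and much shorter route that you are missing: it uses Theorem 1 \emph{only} to show that every $2\times 2$ minor of $F$ is divisible by $\alpha\beta$ (by expanding $f_{i,j}=\alpha\delta'_{i,j}-\beta\delta_{i,j}$ and applying (1) to the two "pure" $2\times 2$ determinants), and then invokes Sylvester's determinant identity
$$
\det(F')\,(f_{11})^{k-2}=\det(G),
$$
where $G$ is the $(k-1)\times(k-1)$ matrix of those $2\times 2$ minors of $F'$ containing $f_{11}$. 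Since each entry of $G$ is divisible by $\alpha\beta$, $\det(G)$ is divisible by $(\alpha\beta)^{k-1}$, and since $f_{11}$, viewed as a polynomial in the generic entries of ${\cal A}$, shares no factor with $\alpha\beta$, one may cancel $(f_{11})^{k-2}$ to conclude $(\alpha\beta)^{k-1}\mid\det(F')$. Without Sylvester's identity (or an equally concrete substitute) your step (ii) is a plan, not a proof.

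Your step (iii) is essentially sound and close to the paper: the paper proves (3) by the one-line block identity
$$
\left(\begin{array}{cc} I&0\\-\beta I&\alpha I\end{array}\right)
\left(\begin{array}{cc} A^*&0\\0&B^*\end{array}\right)
\left(\begin{array}{cc} A&C\\B&D\end{array}\right)=
\left(\begin{array}{cc} \alpha I&A^*C\\0&F\end{array}\right),
$$
whose determinants give $\alpha^{n}\cdot\alpha^{n-1}\beta^{n-1}\det{\cal A}=\alpha^{n}\det F$; your Schur-complement-plus-Zariski-density argument would also work here, at the cost of a detour through the fraction field. But note that (3) is the easy half of the theorem and does not by itself yield the minor divisibility for $k<n$; the substance of the theorem lives in step (ii), which is where your proposal is incomplete.
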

\begin{proof}

Let ${\delta_{i,j}}$ be the determinant of matrix $A$ after replacing its column
 $i$ by column  $j$ of matrix $C$, and let
${\delta'_{i,j}}$ be the determinant of matrix $B$ after replacing its column 
 $i$ by column  $j$ of matrix $D$. 

It is then obvious that  $A^*C=(\delta_{i,j})$ and 
$B^*D=(\delta'_{i,j})$, $i,j=1,\ldots,n$, and that the elements of matrix $F=(f_{i,j})$ can be expressed as 
$f_{i,j}=\alpha \delta'_{i,j} - \beta \delta_{i,j}$.

Let us first examine an arbitrary minor of order two of the matrix $F$: 
$$
\Dmtr {f_{i,j}} {f_{i,q}} {f_{p,j}} {f_{p,q}} =
\alpha^2\Dmtr {\delta'_{i,j}} {\delta'_{i,q}} {\delta'_{p,j}} {\delta'_{p,q}} 
+\beta^2\Dmtr {\delta_{i,j}} {\delta_{i,q}} {\delta_{p,j}} {\delta_{p,q}} 
$$
$$
-\alpha \beta ( {\delta'_{i,j}} {\delta_{p,q}} + {\delta_{i,j}} {\delta'_{p,q}} 
-{\delta'_{i,q}} {\delta_{p,j}} -{\delta_{i,q}} {\delta'_{p,j}} 
)
$$
From the identity of column replacement it follows that
$$
\Dmtr {\delta'_{i,j}} {\delta'_{i,q}} {\delta'_{p,j}} {\delta'_{p,q}}=
\beta {\delta'}_{pq}^{ij},
\ \Dmtr {\delta_{i,j}} {\delta_{i,q}} {\delta_{p,j}} {\delta_{p,q}}=\alpha {\delta}_{pq}^{ij}. 
$$

Here ${\delta}_{pq}^{ij} $ (and also ${\delta'}_{pq}^{ij}$) is the determinant of the matrix $A$ ($B$) after 
replacing columns $i$  and $p$ by columns $j$ and $q$ of the matrix $C$ ($D$).

Consequently, every minor of order two of the matrix $F$ is divisible by $\alpha \beta$.

We next examine an arbitrary minor of order  $k$ of the matrix $F$ ($ 2 < k \leq n$).
Moreover, let this be the left upper corner minor of matrix  $F$. 
We denote by $F'$ the matrix of order $k$ corresponding to this minor,
 and by $G$ the matrix of order $k-1$, which is formed by those minors of order two of the matrix $F'$, in which there 
appears the corner element  $f_{11}$.
 
Then we can write Sylvester's determinant identity 
$$
 \det(F') (f_{11})^{k-2}=\det(G).
$$

Every element of the matrix $G$ is divisible by $\alpha \beta$, since it a minor of order two of the matrix $F$. 
Therefore,  $\det(G)$ is divisible by $(\alpha \beta)^{k-1}$. 
The element 
$f_{11}=\alpha \delta'_{11} - \beta \delta_{11}$, considered as a polynomial of elements of the matrix 
${\cal A}=(a_{i,j})$, does not have common multiples with $\alpha \beta$.
Consequently, $ \det(F')$ is divisible by $(\alpha \beta)^{k-1}$.

To prove the last claim of the theorem we examine the matrix identity
$$ 
\left(\begin{array}{cc} I&0\\-\beta  I&\alpha I \end{array}\right)
\left(\begin{array}{cc} A^*&0\\ 0 & B^*         \end{array}\right)
\left(\begin{array}{cc} A&C\\B&D                \end{array}\right)=
\left(\begin{array}{cc} \alpha I & A^*C \\ 0&F  \end{array}\right).
$$
The right-hand side is the product of the matrices on the left-hand side.  Corresponding to this matrix 
identity we have the determinant identity
$ \alpha^n \cdot \alpha^{n-1} \beta^{n-1} \cdot \det{\cal A}=\alpha^n \det F$, from which follows  (3).
\qed
\end{proof}

\begin{theorem}
Let $R$ be a commutative domain, 
${\cal A}=
\left(\begin{array}{cc} A&C\\B&D \end{array}\right)$ a matrix of order $2n$ over $R$,  $A, B, C, D$ square blocks,
$\alpha=\det A \neq 0$,  $\beta=\det B \neq 0$ and $F=\alpha B^*D-\beta A^*C$.
 
Then $(\alpha \beta)^{-n+2}F^*\in R^n $ and 
 
$$ {\cal A}^*=
\left(\begin{array}{cc}  \alpha^{-1}|{\cal A}|I&- \alpha^{-2}\beta^{-1}A^*C\\0&\alpha^{-1}\beta^{-1}I\end{array}\right)
\left(\begin{array}{cc} I&0\\ 0 & (\alpha \beta)^{-n+2}F^*         \end{array}\right)
\left(\begin{array}{cc} I&0\\-\beta  I&\alpha I \end{array}\right)
\left(\begin{array}{cc} A^*&0\\ 0 & B^*         \end{array}\right).
 \eqno (4)
$$
\end{theorem}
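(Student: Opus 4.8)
The plan is to handle the two assertions in turn. The divisibility statement $(\alpha\beta)^{-n+2}F^*\in R^{n\times n}$ is an immediate consequence of Theorem~2: the entries of the adjoint $F^*$ of the order-$n$ matrix $F$ are, up to sign, the minors of order $n-1$ of $F$, and by Theorem~2 each of those is divisible by $(\alpha\beta)^{(n-1)-1}=(\alpha\beta)^{n-2}$; hence $F^*=(\alpha\beta)^{n-2}G$ with $G\in R^{n\times n}$, so that $(\alpha\beta)^{-n+2}F^*=G$.

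For identity~(4) I would first argue over the field of fractions $K$ of $R$ in the generic situation $\det{\cal A}\neq0$, and remove that restriction at the end. Write $L=\left(\begin{array}{cc} I&0\\-\beta I&\alpha I\end{array}\right)$, $M=\left(\begin{array}{cc} A^*&0\\0&B^*\end{array}\right)$ and $U=\left(\begin{array}{cc}\alpha I&A^*C\\0&F\end{array}\right)$ for the matrices occurring in the identity $LM{\cal A}=U$ proved inside Theorem~2. Since $\alpha\neq0$ and $\beta\neq0$, both $L$ and $M$ are invertible over $K$; moreover $\det U=\alpha^n\det F=\alpha^n(\alpha\beta)^{n-1}\det{\cal A}\neq0$ by~(3), so $U$ is invertible as well. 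Hence ${\cal A}^{-1}=U^{-1}LM$, and multiplying by $\det{\cal A}$ gives ${\cal A}^*=\det({\cal A})\,U^{-1}L\,M$. The block upper triangular matrix $U$ is inverted explicitly as
$$
U^{-1}=\left(\begin{array}{cc}\alpha^{-1}I&-\alpha^{-1}A^*CF^{-1}\\0&F^{-1}\end{array}\right),
$$
and substituting $F^{-1}=F^*/\det F$ together with $\det F=(\alpha\beta)^{n-1}\det{\cal A}$ from~(3) turns $\det({\cal A})F^{-1}$ into $(\alpha\beta)^{-n+1}F^*$. Collecting the powers of $\alpha$ and $\beta$ one then checks that
$$
\det({\cal A})\,U^{-1}=
\left(\begin{array}{cc}\alpha^{-1}|{\cal A}|I&-\alpha^{-2}\beta^{-1}A^*C\\0&\alpha^{-1}\beta^{-1}I\end{array}\right)
\left(\begin{array}{cc} I&0\\0&(\alpha\beta)^{-n+2}F^*\end{array}\right),
$$
which is precisely the product of the first two factors on the right-hand side of~(4); multiplying on the right by $LM$ gives~(4).

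Finally, to drop the assumption $\det{\cal A}\neq0$ and to descend from $K$ back to $R$, I would use the standard specialization argument. Multiplying~(4) through by $\alpha^2\beta$ clears all denominators and produces the relation
$$
\alpha^2\beta\,{\cal A}^*=
\left(\begin{array}{cc}\alpha\beta|{\cal A}|I&-A^*C\\0&\alpha I\end{array}\right)
\left(\begin{array}{cc} I&0\\0&(\alpha\beta)^{-n+2}F^*\end{array}\right)L\,M ,
$$
all of whose entries are polynomials in the entries of ${\cal A}$ --- the block $(\alpha\beta)^{-n+2}F^*$ being polynomial by the first part of the theorem, already over $\mathbf{Z}[a_{i,j}]$. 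This polynomial identity has just been verified for a matrix of independent indeterminates (for which $\alpha$, $\beta$ and $\det{\cal A}$ are simultaneously nonzero), hence it holds identically and therefore in every commutative ring, in particular in $R$; and since $R$ is a domain with $\alpha\beta\neq0$, division by $\alpha^2\beta$ is exact and recovers~(4). The single delicate point is precisely this last passage to a general commutative domain: it is legitimate because the first part of the theorem guarantees that $(\alpha\beta)^{-n+2}F^*$ is an honest matrix over the ring, so that both sides are genuinely polynomial and the generic identity can be specialized.
\qed
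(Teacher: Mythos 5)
Your proof is correct and takes essentially the same approach as the paper's: both rest on the block-elimination identity $LM{\cal A}=U$ (written out in the proof of Theorem~2), on identity~(3), and on the divisibility of the order-$(n-1)$ minors of $F$ by $(\alpha\beta)^{n-2}$; the paper simply runs the same computation in the opposite direction, multiplying ${\cal A}$ on the left by the factors of~(4) from the last to the first and checking that the product is $|{\cal A}|I$. Your concluding specialization argument is a small but genuine improvement, since the paper's verification only pins down ${\cal A}^*$ when $\det{\cal A}\neq 0$.
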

\begin{proof}

Let us myltiply the matrix A from the left by the terms of the right-hand side of (4), sequentially, begining with
the last term. We obtain step by step the following:
$$
{\cal A}
\rightarrow
\left(\begin{array}{cc} \alpha I &A^*C \\ \beta I & B^*D  \end{array}\right)
\rightarrow
\left(\begin{array}{cc} \alpha I &A^*C \\ 0 & F  \end{array}\right)
\rightarrow
\left(\begin{array}{cc} \alpha I &A^*C \\ 0 & \alpha \beta|{\cal A}|I  \end{array}\right)
\rightarrow 
%
|{\cal A}|I.
$$
Note that the elements of the matrix $F^*$  are minors of order  $n-1$ of the matrix $F$ and according to Theorem 2 
they are divisible by $(\alpha\beta)^{n-2}$, and that we need to use identity (3).
\qed
\end{proof}

\begin{theorem}

Let $R$ be a commutative domain, $0\neq \gamma \in R$, 
${\cal A}=
\left(\begin{array}{cc} A&C\\B&D \end{array}\right)$ a matrix of order $2n$ $(n\geq 2)$ over $R$, such that every 
minor of order $k$ is divisible by 
$\gamma^{k-1}$, $A, B, C, D$ square blocks,
$\alpha=\gamma^{1-n}\det A \neq 0$,  $\beta=\gamma^{1-n}\det B \neq 0$, 
${\mathbf A}^*=\gamma^{2-n} A^*$, ${\mathbf B}^*=\gamma^{2-n} B^*$ and
$F=(\alpha \gamma^{-1}{\mathbf B}^*D-\beta \gamma^{-1}{\mathbf A}^*C)$.
 
Then,  
$$ \gamma^{2-2n}{\cal A}^*=
\left(\begin{array}{cc}  \alpha^{-1}\gamma^{1-2n}|{\cal A}|I&
      - (\alpha^{2}\beta\gamma)^{-1}{\mathbf A}^*C\\
0&(\alpha\beta\gamma)^{-1}I\end{array}\right)
\times
$$
$$
\left(\begin{array}{cc}       I&0     \\ 0 & (\alpha \beta)^{-n+2} F^*\end{array}\right) 
\left(\begin{array}{cc}       I&0     \\- \beta  I&  \alpha I         \end{array}\right)
\left(\begin{array}{cc}{\mathbf A}^*&0\\ 0 &  {\mathbf B}^*           \end{array}\right).
 \eqno (5)
$$
Here $\gamma^{2-2n}{\cal A}^*$ and the last three factors on the right-hand side of (5) are matrices over  $R$.
\end{theorem}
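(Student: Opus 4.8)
The plan is to deduce (5) from identity (4) of Theorem~3, which is applicable to the present ${\cal A}$ since $\det A=\gamma^{n-1}\alpha\neq0$ and $\det B=\gamma^{n-1}\beta\neq0$. Let $F_{0}=(\det A)\,B^{*}D-(\det B)\,A^{*}C$ be the matrix occupying the place of $F$ in Theorem~3. The definitions give at once $F_{0}=\gamma^{2n-2}F$, hence $F_{0}^{*}=\gamma^{(2n-2)(n-1)}F^{*}$, and also $A^{*}=\gamma^{n-2}{\mathbf A}^{*}$, $B^{*}=\gamma^{n-2}{\mathbf B}^{*}$. Substituting these into the four factors on the right-hand side of (4): the last factor $\mathrm{diag}(A^{*},B^{*})$ becomes $\gamma^{n-2}\,\mathrm{diag}({\mathbf A}^{*},{\mathbf B}^{*})$; the third, $\left(\begin{array}{cc}I&0\\-\det B\,I&\det A\,I\end{array}\right)$, becomes $\mathrm{diag}(I,\gamma^{n-1}I)\left(\begin{array}{cc}I&0\\-\beta I&\alpha I\end{array}\right)$; and the second becomes $\mathrm{diag}(I,\gamma^{2n-2}I)\,\mathrm{diag}(I,(\alpha\beta)^{2-n}F^{*})$, using $(\det A\det B)^{2-n}F_{0}^{*}=\gamma^{2n-2}(\alpha\beta)^{2-n}F^{*}$. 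The two block-diagonal scalar matrices $\mathrm{diag}(I,\gamma^{n-1}I)$ and $\mathrm{diag}(I,\gamma^{2n-2}I)$ commute with the block-diagonal factor $\mathrm{diag}(I,(\alpha\beta)^{2-n}F^{*})$, so they may be carried to the front and, together with the scalar $\gamma^{2-2n}\cdot\gamma^{n-2}$ picked up overall, combined with the first factor of (4); collecting the powers of $\gamma$ throughout yields the right-hand side of (5) as an identity over the field of fractions of $R$, inherited from (4).

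As a check one may also verify (5) directly in the style of the proof of Theorem~3, left-multiplying ${\cal A}$ in turn by the four factors on the right of (5). The last factor sends ${\cal A}$ to $\left(\begin{array}{cc}\gamma\alpha I&{\mathbf A}^{*}C\\\gamma\beta I&{\mathbf B}^{*}D\end{array}\right)$, because ${\mathbf A}^{*}A=\gamma^{2-n}(\det A)I=\gamma\alpha I$ and similarly ${\mathbf B}^{*}B=\gamma\beta I$; the third factor annihilates the lower-left block and leaves $\alpha{\mathbf B}^{*}D-\beta{\mathbf A}^{*}C=\gamma F$ in the lower-right; the second replaces that block by $\gamma(\alpha\beta)^{2-n}(\det F)\,I$; and the first must then return $\gamma^{2-2n}|{\cal A}|I$. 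This route makes the analogue of identity (3) explicit, namely $\det F=\gamma^{2-2n}(\alpha\beta)^{n-1}|{\cal A}|$, which is obtained by taking determinants in the identity $\left(\begin{array}{cc}I&0\\-\beta I&\alpha I\end{array}\right)\mathrm{diag}({\mathbf A}^{*},{\mathbf B}^{*})\,{\cal A}=\left(\begin{array}{cc}\gamma\alpha I&{\mathbf A}^{*}C\\0&\gamma F\end{array}\right)$ just derived, together with $\det{\mathbf A}^{*}=\gamma^{n(2-n)}(\det A)^{n-1}=\gamma\alpha^{n-1}$, the same for ${\mathbf B}^{*}$, and $\det\left(\begin{array}{cc}I&0\\-\beta I&\alpha I\end{array}\right)=\alpha^{n}$. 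As in Theorem~3, the passage from $(\mathrm{RHS})\cdot{\cal A}=\gamma^{2-2n}|{\cal A}|I$ to (5) itself is made over the field of fractions.

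It remains to check that $\gamma^{2-2n}{\cal A}^{*}$ and the last three factors of (5) are matrices over $R$. Since every minor of ${\cal A}$ of order $k$ is divisible by $\gamma^{k-1}$: the entries of ${\cal A}^{*}$ are minors of order $2n-1$, so $\gamma^{2-2n}{\cal A}^{*}\in R^{2n\times2n}$; the entries of $A^{*}$ and $B^{*}$ are minors of order $n-1$, so ${\mathbf A}^{*}=\gamma^{2-n}A^{*}$ and ${\mathbf B}^{*}=\gamma^{2-n}B^{*}$ lie over $R$; and $\left(\begin{array}{cc}I&0\\-\beta I&\alpha I\end{array}\right)$ plainly does. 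The remaining point, that $(\alpha\beta)^{2-n}F^{*}\in R^{n\times n}$, is equivalent to the statement that every minor of $F$ of order $k$ is divisible by $(\alpha\beta)^{k-1}$, and I expect this to be the main obstacle: it does not drop out of Theorem~2 by merely rescaling $F_{0}$ to $F$, because Theorem~2 gives divisibility only by powers of $\det A\cdot\det B=\gamma^{2n-2}\alpha\beta$ and the extra powers of $\gamma$ do not cancel. I would instead obtain it by re-running the proof of Theorem~2 in the present normalisation: writing $f_{i,j}=\gamma^{1-n}(\alpha\,\delta'_{i,j}-\beta\,\delta_{i,j})$ in the notation used there, one finds that an order-two minor of $F$ evaluated by the identity of column replacement, after the powers of $\gamma$ that the divisibility of the minors of ${\cal A}$ supplies have been collected, equals $\alpha\beta$ times an element of $R$; then Sylvester's identity $\det(F')(f_{11})^{k-2}=\det(G)$, with $F'$ and $G$ as in Theorem~2, lifts this to order $k$, the delicate step being the coprimality of $f_{11}$ with $\alpha\beta$, which is argued as there. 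With that divisibility in hand, $(\alpha\beta)^{2-n}F^{*}\in R^{n\times n}$, and the proof is complete.
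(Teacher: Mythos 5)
Your strategy coincides with the paper's: the ``check'' in your second paragraph (left-multiplying ${\cal A}$ successively by the four factors of (5)) is exactly the proof the paper gives, and your intermediate matrices, your determinant identity $\det F=\gamma^{2-2n}(\alpha\beta)^{n-1}|{\cal A}|$, and your closing discussion of integrality are all sound. In particular, your observation that the divisibility of the order-$k$ minors of $F$ by $(\alpha\beta)^{k-1}$ does \emph{not} follow from Theorem~2 by rescaling $F_{0}=\gamma^{2n-2}F$ (one only gets $\gamma^{2n-2}$ times the minor divisible by $(\alpha\beta)^{k-1}$), but requires re-running the column-replacement and Sylvester arguments in the new normalisation, is a genuine point that the paper's one-line ``all divisions are exact'' glosses over.

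The gap is in the one step that neither of your two routes actually carries out: the final left-multiplication by the first factor. After the third factor one has $\left(\begin{array}{cc}\alpha\gamma I&{\mathbf A}^{*}C\\0&(\alpha\beta)^{2-n}\gamma|F|I\end{array}\right)$ with $(\alpha\beta)^{2-n}\gamma|F|=\alpha\beta\gamma^{3-2n}|{\cal A}|$; multiplying by the first factor as printed in (5), the $(1,2)$ block of the product is $\alpha^{-1}\bigl(\gamma^{1-2n}-\gamma^{2-2n}\bigr)|{\cal A}|\,{\mathbf A}^{*}C$, which vanishes only when $\gamma=1$. For the cancellation to occur the $(1,2)$ entry of the first factor must be $-(\alpha^{2}\beta\gamma^{2})^{-1}{\mathbf A}^{*}C$, not $-(\alpha^{2}\beta\gamma)^{-1}{\mathbf A}^{*}C$. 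The same exponent falls out if you finish the bookkeeping of your first paragraph: carrying $\mathrm{diag}(I,\gamma^{2n-2}I)\,\mathrm{diag}(I,\gamma^{n-1}I)$ and the overall factor $\gamma^{2-2n}\cdot\gamma^{n-2}$ into the first factor of (4) turns $-\alpha_{0}^{-2}\beta_{0}^{-1}A^{*}C$ into $-\gamma^{-2}\alpha^{-2}\beta^{-1}{\mathbf A}^{*}C$ (the algorithm of Section~4, with $M=\gamma^{-1}{\mathbf A}^{*}C$ and $L'=-\alpha^{-1}ML$, encodes precisely this $\gamma^{-2}$). So identity (5) as printed is false for $\gamma\neq1$; your proof must either correct that entry or it fails at exactly the assertion you left unverified with the words ``yields the right-hand side of (5)'' and ``must then return $\gamma^{2-2n}|{\cal A}|I$''.
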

\begin{proof}

Let us myltiply the matrix A from the left by the terms of the right-hand side of (5) sequentially, begining with
the last term. We obtain step by step the following:

$$
{\cal A}
\rightarrow 
\left(\begin{array}{cc}\alpha\gamma I&{\mathbf A}^*C \\ \beta\gamma I& {\mathbf B}^*D\end{array}\right)
\rightarrow 
\left(\begin{array}{cc} \alpha\gamma I &{\mathbf A}^*C \\ 0 &\gamma F  \end{array}\right)
\rightarrow
\left(\begin{array}{cc} \alpha\gamma I &{\mathbf A}^*C \\ 0 & (\alpha \beta)^{-n+2}\gamma |F| I  \end{array}\right)
\rightarrow 
\gamma^{2-2n}|{\cal A}|I.
$$
Since $\alpha$ and $\beta$ are minors of order $n$, and the elements of the matrices 
$A^*$ and $B^*$ are minors of order $n-1$, then according to the conditions of this theorem and by Theorem 2 all 
divisions are exact.
\qed
\end{proof}

{\bf Consequence.} 
Let $R$ be a commutative domain, $0\neq \gamma \in R$, 
${\cal A}= 
\left(\begin{array}{cc} A&C\\B&D \end{array}\right)$ a matrix of order  $2n$ $(n\geq 2)$ over $R$, such that every
 minor of order $k$ is divisible by 
$\gamma^{k-1}$, $A, B, C, D$ square blocks,

$$
\alpha=\gamma^{1-n}\det A \neq 0, \  \beta=\gamma^{1-n}\det B \neq 0, \ \varphi=\gamma^{2-2n}\det {\cal A},
$$ 
$$
{\mathbf A}^*=\gamma^{2-n} A^*, \ {\mathbf B}^*=\gamma^{2-n} B^*,
$$
$$ 
F=(\alpha \gamma^{-1}{\mathbf B}^*D-\beta \gamma^{-1}{\mathbf A}^*C),
{\mathbf F}^*=(\alpha\beta)^{2-n}F^*,
$$ 
$$
H=\alpha^{-1}\gamma^{-1}{\mathbf F}^*{\mathbf A}^*, 
L=\beta^{-1}\gamma^{-1}{\mathbf F}^*{\mathbf B}^*, 
M=\alpha^{-1}{\mathbf A}^*C.
$$
  
Then, every minor of order  $s$ of matrix $F$ is divisible by $(\alpha\beta)^{s-1}$,
$$ 
\varphi=(\alpha\beta)^{1-n}\det F \ \ \hbox{and}
$$
$$ \gamma^{2-2n}{\cal A}^*
=
\left(\begin{array}{cc}
 \alpha^{-1}(\varphi{\mathbf A}^*+ MH)&  -\alpha^{-1}ML  \\
-H & L
\end{array}\right).
$$

\section{The algorithm}  

\hspace*{0.5 cm} Using the theorems we proved about the factorization of the adjoint matrix we now introduce the 
algorithm for computing it along with the determinant of a given matrix.

Let $R$ be a commutative domain, $0\neq \gamma \in R$, 
${\cal A}=
\left(\begin{array}{cc} A&C\\B&D \end{array}\right)$ a matrix of order $2n=2^N$ over $R$, such that every minor of order 
 $k$ is divisible by $\gamma^{k-1}$.  Moreover, we assume that all minors, on which a division is performed during the 
computation of the adjoint matrix, are non-zero. 

The inputs to the algorithm are the matrix $\cal A$ and the number $\gamma=1$.

The outputs from the algorithm are  
$\gamma^{1-2n}|{\cal A}|$ and  $ \gamma^{2-2n}{\cal A}^*$. 
Note here that the determinant of the matrix has been divided by  $\gamma^{2n-1}$, and that the adjoint matrix has 
been divided by  $\gamma^{2n-2}$.
\bigskip
 
\centerline {\bf{ Algorithm ParAdjD}}
\medskip

\centerline 
{\bf{\{ $\gamma^{1-2n}|{\cal A}|, \gamma^{2-2n}{\cal A}^*$
\}=ParAdjD($ {\cal A},  \gamma $)}}
\bigskip

\noindent
{\it Input:} ${\cal A}=\left(\begin{array}{cc} A&C\\B&D \end{array}\right)$, and $\gamma$.
$A,B,C,D\in R^{n\times n}$, $\gamma\in R$.

\noindent
{\it Output: } \{$\gamma^{1-2n}|{\cal A}|, \  \gamma^{2-2n}{\cal A}^*$ \}.  
\bigskip
\smallskip

\noindent
1. If the matrix  ${\cal A}$ is of order two, then 
 
\noindent
{\it output:} 
$$
\bigg\{
\gamma^{-1}(AD-BC), \
\left(\begin{array}{cc} D&-C\\-B&A \end{array}\right) \bigg\}.
$$

\noindent
otherwise, proceed to the next point.
 
\noindent
2.  
Concurrently compute

{\bf{\{ $\alpha, {\mathbf A}^* $\}=ParAdjD($A, \gamma$)}} and
{\bf{\{ $\beta, {\mathbf B}^* $\}=ParAdjD($B, \gamma$)}}.

\noindent
3. 
Concurrently compute

${N=\gamma^{-1}\mathbf B}^*D$ and $M=\gamma^{-1}{\mathbf A}^*C$, 
and then

$
F=\alpha N  -\beta M.
$

\noindent
4. Compute 

{\bf{\{ $  \varphi, {\mathbf F}^* $\}=ParAdjD($F, \alpha \beta$)}}.
 
\noindent
5. Concurrently compute

$\varphi'= \gamma^{-1}\varphi$, 
$H=\alpha^{-1}\gamma^{-1}{\mathbf F}^*{\mathbf A}^*$ and 
 $L=\beta^{-1}\gamma^{-1}{\mathbf F}^*{\mathbf B}^*$. 

\noindent
6. Concurrently compute

$H'=  \alpha^{-1}(\varphi' {\mathbf A}^*+ MH)$ and $L'=-\alpha^{-1}ML$.

\noindent
{\it Output:} 
$$
\bigg\{
\varphi', \
\left(\begin{array}{cc}   H'&  L'  \\
-H & L
 \end{array}\right) \bigg\}.
$$

\subsection{Example}
 
$$
{\cal A}=\left(\begin{array}{cc} A&C\\B&D \end{array}\right) 
 =\left(\begin{array}{cccc}
 0& 2&-2& 2\\
 1&-3& 1&-2\\
 3& 0&-3& 0\\
-1& 3&-1& 1  \end{array}\right), \ \gamma=1. 
$$

\noindent 
1. We concurrently compute 
$$
 \bigg\{ -2,\left(\begin{array}{cc} -3& -2\\ -1& 0 \end{array}\right) \bigg\}
 = \mbox{\bf ParAdjD}(A, 1)  \mbox{ and } 
$$
$$
  \bigg\{ 9, \left(\begin{array}{cc} 4&-2\\2&-2 \end{array}\right) \bigg\}
 = \mbox{\bf ParAdjD}(B, 1) .
$$
\noindent
2. We concurrently compute 

${N=\mathbf B}^*D = \left(\begin{array}{cc} -9& 0\\ -6& 3 
\end{array}\right)$
 and $M={\mathbf A}^*C=\left(\begin{array}{cc} -3& -2\\ -1& 0 
\end{array}\right)$

and then
$
F=\alpha N  -\beta M=\left(\begin{array}{cc} -18& 18\\ -6& 12 
\end{array}\right).
$
 
\noindent
3. We  compute 
$$
\bigg\{ 6,\left(\begin{array}{cc} 12& -18\\ 6& -18 \end{array}\right) \bigg\}
=\mbox{\bf ParAdjD}(F, (-2)\cdot 9) .
$$ 
4. We concurrently compute  $\varphi'= \gamma^{-1}\varphi=6$,

\noindent
$H=\alpha^{-1}\gamma^{-1}{\mathbf F}^*{\mathbf A}^* =
\left(\begin{array}{cc} 9& 12\\ 0& 6 \end{array}\right)$, 
 $L=\beta^{-1}\gamma^{-1}{\mathbf F}^*{\mathbf B}^* =
\left(\begin{array}{cc} 2& -6\\ 0& -6 \end{array}\right)$.

\noindent
5. We concurrently compute  

$H'={\alpha^{-1}(\varphi'{\mathbf A}^*+ MH)}=
\left(\begin{array}{cc} -9& -12\\ -6& -6 \end{array}\right)$

 and $L'= -\alpha^{-1} ML=
\left(\begin{array}{cc} 4 & -6 \\ 2 & 0 \end{array}\right)
$.

\noindent
{\it Output:} 
$$
\Bigg\{ 
{\varphi'}, \
\left(\begin{array}{cc}  H' & L' \\
-H & L
 \end{array}\right) \bigg\}=
\bigg\{6,
\left(
\begin{array}{cccc}  
-9& -12 & 4 & -6 \\
-6& -6 & 2 & 0 \\
-9& -12 & 2 & -6 \\
 0& -6 & 0 & -6  \end{array}\right) \Bigg\}.
$$
 
\section{Discussion on the algorithm}

We now compare the above algorithm with the one found in \cite{Ma00}.

One recursive step in the algorithm found in \cite{Ma00} consists  of 6 matrix multiplications and two  recursive calls.  
In this case only two matrix multiplications may be computed concurrently with the rest.  Therefore, the parallel 
implementation of one reccursive step of this algorithm consists of six sequential steps: four matrix multiplications 
and two recursive calls.

One recursive step in the algorithm described above consists  of 6 matrix multiplications and three recursive calls.  
In this case three matrix multiplications may be computed concurrently with the other multiplications and two, 
of the three, recursive calls may be executed concurrently.  Therefore, the parallel implementation of one reccursive 
step of the algorithm described above consists of five sequential steps: three matrix multiplications and two recursive 
calls.

This way, despite the fact that the algorithm described above has 50\% more operations, its depth is  25\% less. 

For the computations in both algorithms it is assumed  that the leading minors at every step are different from zero.  
If this assumption fails, we have to pivot rows, or columns, to make sure the leading minor is not zero.

The main difference of the new algorithm is that the choice of the nonzero leading minor is made independently within 
the local submatrix in each of the parallel branches.  After that, this submatrix, together with its (pivot) 
permutation matrix, is used in further computations.  We do not consider pivots in the above algorithm, as this 
will be the topic of another paper.

By contrast,  in the algorithm found in \cite{Ma00}, if a nonzero leading minor needs to be found  the whole 
computational process stops during its search.  Moreover, the search for this minor is done by the least diagonal 
block, and in case no pivot is found the search may be extended to the whole matrix. 

\bigskip
{ACKNOWLEDGMENTS}

\bigskip
Second author was supported in part by the Human Capital Foundation,
project no. 23-03-24, RFBR, project no. 04-07-90268
and program "Universities of Russia", project no. 04.01.464.


\end{document}